\documentclass[11pt,letterpaper]{article}

\usepackage[margin=1in]{geometry}
\usepackage[utf8]{inputenc}

\usepackage{cite}
\usepackage[bookmarks=true,pdfstartview=FitH,colorlinks,linkcolor=darkred,filecolor=darkred,citecolor=darkred,urlcolor=darkred]{hyperref}

\usepackage{multicol}
\usepackage{color,xcolor}
\usepackage{graphicx,color,eso-pic}
\usepackage{amsmath,amssymb,stmaryrd}
\usepackage{boxedminipage}
\usepackage{cite}
\usepackage{url}
\usepackage{graphicx}
\usepackage[bf]{caption}
\usepackage{subcaption}
\usepackage{color}
\usepackage{xspace}
\usepackage{multirow}
\usepackage{amsmath,amsthm,amstext,amssymb,amsfonts,latexsym}
\usepackage{wrapfig}
\usepackage{comment}
\usepackage{algorithmicx}
 \usepackage{algpseudocode}

\definecolor{darkred}{rgb}{0.5, 0, 0}
\definecolor{darkgreen}{rgb}{0, 0.5, 0}
\definecolor{darkblue}{rgb}{0,0,0.5}

\newtheorem{thm}{Theorem}[section]      

\newtheorem{lemma}[thm]{Lemma}

\newtheorem{corollary}[thm]{Corollary}

\newtheorem{definition}[thm]{Definition}


\newtheoremstyle{boxes}
{2pt}
{0pt}
{}
{}
{\bfseries}
{}
{\newline}
{\thmname{#1}\thmnumber{ #2}:  
\thmnote{#3}}

\theoremstyle{boxes}
\newtheorem{algorithm}[thm]{Algorithm}
\newtheorem{myalgo}[thm]{Algorithm}
\newtheorem{myfunc}[thm]{Functionality}
\newtheorem{myconst}[thm]{Construction}

\renewcommand{\sec}{{\lambda}}
\newcommand{\A}{{\cal A}}

\newcommand{\bit}{{\{0,1\}}}

\newcommand{\X}{\mathbf{X}}
\renewcommand{\sec}{{\lambda}}
\newcommand{\e}{{\epsilon}}

\newcommand{\bsize}{\ensuremath{b}}
\newcommand{\op}{\ensuremath{\mathsf{op}}\xspace}
\newcommand{\data}{{\ensuremath{\mathsf{data}}}\xspace}
\newcommand{\addr}{{\ensuremath{\mathsf{addr}}}\xspace}

\newcommand{\Sim}{\ensuremath{{\sf Sim}}\xspace}
\newcommand{\abs}[1]{\left|{#1}\right|}

\newcommand{\mem}{{\ensuremath{\mathsf{mem}}}\xspace}

\newcommand{\Read}{{\ensuremath{\mathsf{read}}}\xspace}
\newcommand{\Write}{{\ensuremath{\mathsf{write}}}\xspace}


\begin{document}

\title{\bf Bucket Oblivious Sort: \\An Extremely Simple Oblivious Sort\thanks{The paper was presented in the 3rd Symposium on Simplicity in Algorithms, SOSA@SODA 2020. This version is identical
to the SOSA'20 conference version modulo typo corrections.
}}

\author{Gilad Asharov \thanks{Bar-Ilan University. Part of the work was done while the author was a post-doctoral fellow at Cornell Tech supported a Junior Fellow award from the Simons Foundation, and while at J.P. Morgan AI Research.} \qquad
T-H. Hubert Chan\thanks{The University of Hong Kong. Partially supported the Hong Kong RGC under the grant 17200418.} \qquad
Kartik Nayak\thanks{Duke University. Part of the work was done while the author was at University of Maryland.} \\
Rafael Pass\thanks{Cornell Tech.} \qquad
Ling Ren\thanks{University of Illinois Urbana-Champaign. Part of the work was done while the author was at MIT.} \qquad
Elaine Shi\thanks{Cornell University.}}

\newcommand{\rl}[1]{{\footnotesize\color{orange}[Ling: #1]}}

\date{}

\maketitle


\begin{abstract}
We propose a conceptually simple oblivious sort and oblivious random permutation algorithms called bucket oblivious sort and bucket oblivious random permutation.
Bucket oblivious sort uses $6n\log n$ time (measured by the number of memory accesses) and $2Z$ client storage with an error probability exponentially small in $Z$. 
The above runtime is only $3\times$ slower than a non-oblivious merge sort baseline;
for $2^{30}$ elements, it is $5\times$ faster than bitonic sort,
the de facto oblivious sorting algorithm in practical implementations. 



\end{abstract}


\section{Introduction}

With the increased use of outsourced storage and computation, privacy of the outsourced data has been of paramount importance. 
A canonical setting is where a client with a small local
storage outsources its encrypted data to an untrusted server. 
In this setting, encryption alone is not sufficient to preserve privacy.
The access patterns to the data may reveal sensitive information. 

Two fundamental building blocks for oblivious storage and computation~\cite{goldreich1996software,goodrich2011privacy,oblivistore} are oblivious sorting and oblivious random permutation.
In these two problems, an array of $n$ elements is stored on an untrusted server, encrypted under a trusted client's secret key.
The client wishes to sort or permute the $n$ elements in a \emph{data-oblivious} fashion.
That is, the sequence of accesses it makes to the server should not reveal any information about the $n$ elements (e.g., their relative ranking).
The client has a small amount of local storage, the access pattern to which cannot be observed by the server.
This work presents simple and efficient algorithms to these two problems, named bucket oblivious sort and bucket oblivious random permutation.

\subsection{State of the Affairs}
For oblivious sort, it is well-known that one can leverage 
sorting networks such as AKS~\cite{aks} and Zig-zag sort~\cite{zigzag}
to obliviously sort $n$ elements in $O(n\log n)$ time. 
Unfortunately, these algorithms are complicated and incur enormous constants rendering them completely impractical. 
Thus, almost all known practical implementations~\cite{oblivistore,oblivm,graphsc}
instead employ the simple bitonic sort algorithm~\cite{bitonic}. 
While asymptotically worse, due to the small leading constants, bitonic sort performs much better in practice.

Oblivious random permutation (ORP) can be realized by assigning a sufficiently long random key to each element, and then obliviously sorting the elements by the keys.
To the best of our knowledge, this remains the most practical solution for ORP.
It then follows that while $O(n \log n)$ algorithms exist in theory, practical instantiations resort to the $O(n \log^2 n)$ bitonic sort.
There exist algorithms such as the Melbourne
shuffle~\cite{ohrimenko2014melbourne} that do not rely on
oblivious sort; but they require $O(\sqrt{n})$ client storage to permute $n$ elements.
Other approaches include the famous Thorp shuffle~\cite{thorp01} and random permutation networks~\cite{randpermnet}, but none of these solutions are competitive in performance either asymptotically or concretely.

\subsection{Our Results}
Let $Z$ be a statistical security parameter that controls the error probability. 
Our bucket oblivious sort runs in $6n\log n$ time ($4n\log n$ for bucket ORP) and has an error probability around $e^{-Z/6}$ when the client can store $2Z$ elements locally.
This is at most $3\times$ slower than the non-oblivious merge sort, and is at least $5\times$ faster than bitonic sort for $n=2^{30}$ (cf. Table~\ref{tab:compare}).
Therefore, we recommend bucket oblivious sort and bucket ORP as attractive alternatives to bitonic sort in practical implementations.

\begin{figure*}[h!]
\centering
\includegraphics[width=0.7\textwidth]{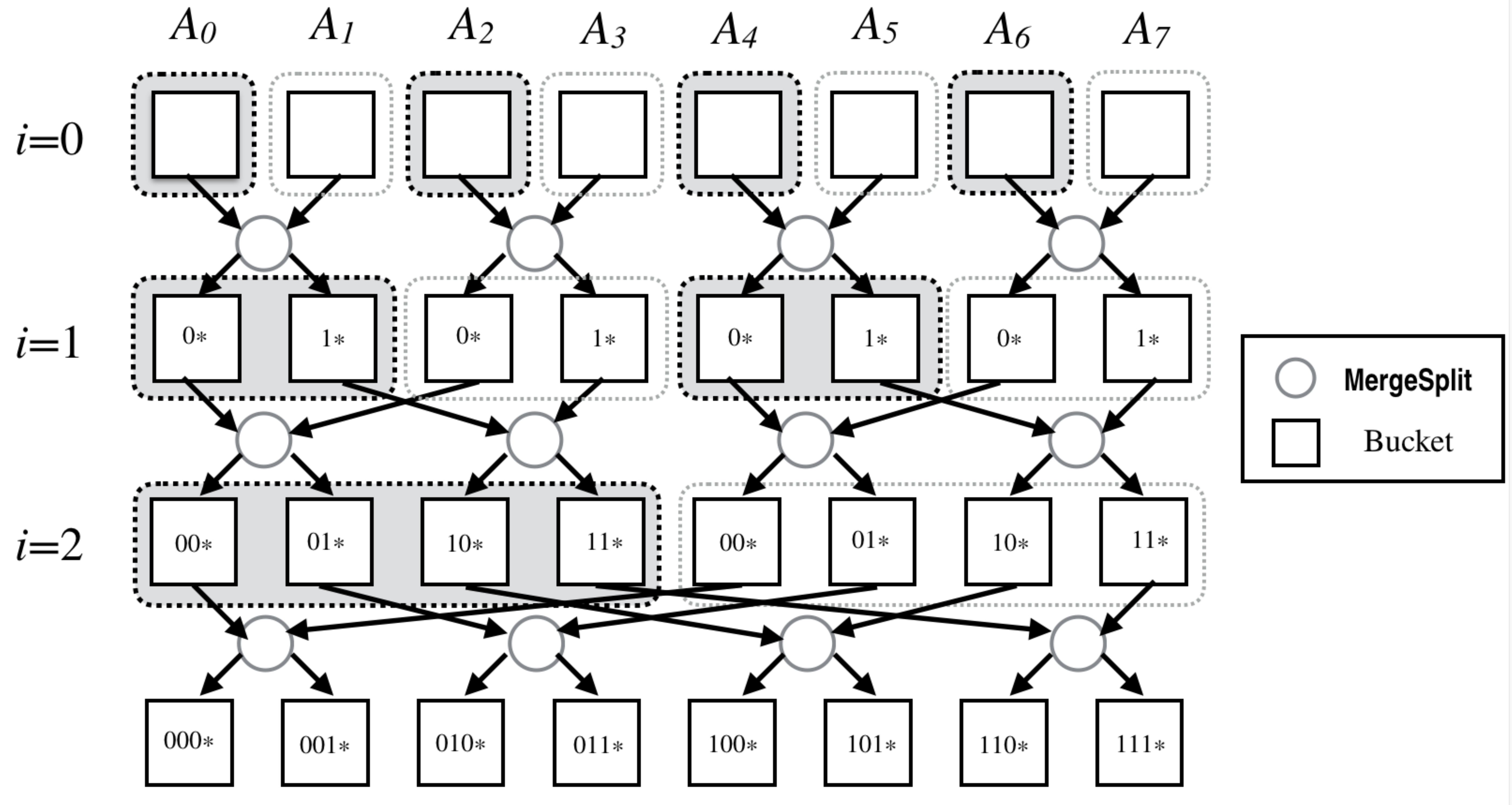}
\captionof{figure}{\textbf{Oblivious random bin assignment with 8 buckets.}
The \textsc{MergeSplit} procedure takes elements from two buckets at level $i$ and put them into two buckets at level $i+1$, according to the $(i+1)$-th most significant bit of the keys. 
At level $i$, every $2^{i}$ consecutive buckets are semi-sorted by the most significant $i$ bits of the keys.}
\label{fig:radix-sort}

\bigskip
\centering
\begin{tabular}{|c|c|c|c|c|}
    \hline
    Algorithm & Oblivious & Client storage & Runtime & Error probability \\
    \hline
    Merge sort & No & $O(1)$ & $2n \log n$ & 0 \\
    Bitonic sort & Yes & $O(1)$ & $n\log^2 n$ & 0 \\
    AKS sort~\cite{aks} & Yes & $O(1)$ & $5.4\times10^7 \times n\log n$ & 0 \\
    Zig-zag sort~\cite{zigzag} & Yes & $O(1)$ & $8\times10^4 \times n\log n$ & 0 \\
    Randomized Shellsort~\cite{RandShellsort} & Yes & $O(1)$ & $24n\log n$ & $\approx n^{-3}$ \\
    \hline
    \textbf{Bucket oblivious sort} & Yes & $2Z$ & $6 n\log n$ & $\approx e^{-Z/6}$\\
    \textbf{Bucket oblivious sort} & Yes & $O(1)$ & $\approx 2n\log n \log^2 Z$ & $\approx e^{-Z/6}$ \\
    \hline   
\end{tabular}
\captionof{table}{\textbf{Runtime of bucket oblivious sort and
classic non-oblivious and oblivious sort algorithms.} Bitonic
sort requires $\frac 14 n \log^2 n$ comparisons. The number of comparisons for AKS
sort and zig-zag sort are cited from \cite{zigzag}. Runtime
represents the number of memory accesses, which is four times the
number of comparisons.}
\label{tab:compare}

\end{figure*}

The core of our algorithms is to assign each element to a random bin and then route the elements through a butterfly network to their assigned random bins. 
This part is inspired by Bucket ORAM~\cite{fletcher2015bucket}. 
In more detail, we divide the $n$ elements into $B=2n/Z$ buckets of size $Z/2$ each and add $Z/2$ dummy elements to each bucket.
Now, imagine that these $B$ buckets form the inputs of a butterfly network --- for simplicity, assume $B$ is a power of two.
Each element is uniformly randomly assigned to one of the $B$ output buckets, represented by a key of $\log B$ bits.
The elements are then routed through the butterfly network to their respective destinations.
Assuming the client can store two buckets locally at a time, at level $i$, the client simply reads elements from two buckets that are distance $2^i$ away in level $i$ and writes them to two adjacent buckets in level $i+1$, using the $i$-th bit of each element's key to make the routing decision. 
We refer readers to Figure~\ref{fig:radix-sort} for a graphical illustration.

The above algorithm is clearly oblivious, as the order in which the client reads and writes the buckets is fixed and independent of the input array. If no bucket overflows, all elements reach their assigned destinations. By setting $Z$ appropriately, we can bound the overflow probability.

Our bucket oblivious sort and bucket ORP algorithms are derived from
the above oblivious random bin assignment building block. 

\paragraph{From oblivious random bin assignment to ORP and oblivious sort.}
To obtain a random permutation, we simply remove all dummy elements and randomly permute 
each bucket of the final layer.
Since the client can hold $Z$ elements, permuting each bucket can be done locally. 
We show that the algorithm is oblivious and gives a random permutation despite revealing the number of dummy elements in each destination bucket.
To get oblivious sort, we can first perform ORP on the input array then apply any \emph{non-oblivious, comparison-based} sorting algorithm (e.g., quick sort or
merge sort). We show that the composition of ORP and non-oblivious sort results in an oblivious sort. 

\paragraph{Dealing with small client storage.}
In Section~\ref{sec:O1client}, we extend our algorithms to support $O(1)$ client storage. 
We can rely on bitonic sort to realize the \textsc{MergeSplit} operation that operates on 4 buckets at a time,
which would result in $O(n\log n\cdot \log^2 Z)$ runtime. 

\paragraph{Locality.} Algorithmic performance when the data is stored on disk has been studied in the external disk model (e.g.,~\cite{RuemmlerW94,ArgeFGV97,Vitter01,Vitter06}) and references within). Recently, Asharov et al.~\cite{AsharovCNPRS19} extended this study to oblivious algorithms.  We discuss how our algorithms can be made locality-friendly in Section~\ref{sec:locality}. 

\paragraph{Subsequent work.}
The work by Ramachandran and Shi~\cite{Ramachandran} improved the algorithm in a cache-oblivious, cache-efficient manner in a binary fork-join model of computation.

\newcommand{\memsize}{{N}}
\newcommand{\blocksize}{{b}}
\newcommand{\Y}{{\bf Y}}

%
\section{Preliminaries}
\label{sec:defs}

\paragraph{Notations and conventions.}
Let $[n]$ denote the set $\{1,\ldots,n\}$. Throughout this paper, we will use
$n$ to denote the size of the instance and use $\lambda$ to denote the security parameter. 
For an ensemble of distributions $\{D_\sec\}$ (parametrized with $\sec$),
we denote by $x \leftarrow D_\sec$ a sampling of an instance from the distribution $D_\sec$. 
We say two ensembles of distributions $\{X_\sec\}$~and~$\{Y_\sec\}$ 
are $\e(\sec)$-statistically-indistinguishable, denoted $\{X_\sec\} \overset{\e(\sec)}{\equiv} \{Y_\sec\}$, 
if for any unbounded adversary $\A$, 
\[
\left|\Pr_{x\leftarrow X_\sec}\left[\A(1^\lambda, x)=1\right] - \Pr_{y\leftarrow Y_\sec}\left[\A(1^\lambda, y)=1\right] \right| \leq \e(\sec) \ .
\]

\paragraph{Random-access machines.}
A RAM is an interactive Turing machine that consists of a memory and a CPU.  The
memory is denoted as $\mem[\memsize,\blocksize]$, and is indexed by the logical
address space $[N] = \{1,2,\ldots,N\}$. We refer to each memory word also as a
\emph{block} and we use $\bsize$ to denote the bit-length of each block. The memory supports read/write
instructions $(\op,\addr, \data)$, where $\op \in \{\Read,\Write\}$, $\addr \in
[N]$ and $\data \in \bit^\bsize \cup \{\bot\}$.  If $\op = \Read$, then
$\data=\bot$ and the returned value is the content of the block located in
logical address $\addr$ in the memory. If $\op=\Write$, then the memory data in
logical address $\addr$ is updated to $\data$.
We use standard setting that $\blocksize = \Theta(\log N)$ (so a word can 
store an address).

\medskip
\noindent
{\bf Obliviousness.}
Intuitively, a RAM program $M$ obliviously simulates a RAM program $f$ if: (1) it has the same input/output behavior as $f$; (2) There exists a simulator $\Sim(\abs{x})$ that produces access pattern that is statistically close to the access pattern of $M(x)$, i.e., it can simulate all memory addresses accessed by $M$ during the execution on $x$, without knowing $x$. In case the access pattern and the functionality are randomized, we have to consider the joint distribution of the simulator and the output of the RAM program or the functionality. 


For a  RAM machine $M$ and input $x$, let ${\sf AccPtrn}(M(x))$ denote the distribution of memory addresses a machine $M$ produces on an input $x$.
\begin{definition}
A RAM algorithm $M$ obliviously implements the functionality $f$ with $\e$-obliviousness if the following hold:
\begin{eqnarray*}
\left\{\Sim(1^\lambda),f(x)\right\}_{x \in \bit^\lambda} \overset{\epsilon(\lambda)}{\equiv} \left\{ {\sf AccPtrn}(M(x)),M(x)\right\}_{x \in \bit^\lambda}
\end{eqnarray*}
If $\epsilon(\cdot)=0$, we say $M$ is perfectly oblivious. 
\end{definition}

The two main functionalities that we focus on in this paper are the following:

\paragraph{Oblivious sort:}
This is a deterministic functionality in which the input is an array $A[1,\ldots,n]$ of memory blocks (i.e., each $A[i] \in \bit^\blocksize$, representing a key). The goal is to output an array $A'[1,\ldots,n]$ which is some permutation $\pi:[n] \rightarrow [n]$ of the array $A$, i.e., $A'[i] = A[\pi(i)]$, such that $A'[1]\leq \ldots \leq A'[n]$. 

\paragraph{Oblivious permutation:} 
This is a randomized functionality in which the input is an array $A[1,\ldots,n]$ of memory blocks. The functionality chooses a random permutation $\pi:[n] \rightarrow [n]$ and outputs an array $A'[1,\ldots,n]$ such that $A'[i] = A[\pi(i)]$ for every $i$. 

\section{Our Construction} 
\label{sec:construction}
\label{sec:random-bin-assignment}

We first present the oblivious random bin assignment algorithm (Section~\ref{sec:obin})  and then use it to implement our bucket oblivious random permutation (Section~\ref{sec:ORP}) and bucket oblivious sort (Section~\ref{sec:osort}).

\newcommand{\val}{{\sf value}}
\newcommand{\pref}{{\sf pref}}

\begin{figure*}[h!]
\centering

\begin{algorithm}[Oblivious Random Bin Assignment]
\begin{algorithmic}
\State
\State \textbf{Input}: an array $\X$ of size $n$
\State Choose a bucket size $Z$ and let $B$ be the smallest power of two that is $\geq 2n/Z$. 
\State Define $(\log B+1)$ arrays, each containing $B$ buckets of size $Z$. Denote the $j$-th bucket of the $i$-th array $A_j^{(i)}$.
\State For each element in $\X$, assign a uniformly random key in $[0,B-1]$.
\State Evenly divide $\X$ into $B$ groups. Put the $j$-th group into $A_j^{(0)}$ and pad with dummy elements to have size $Z$.

\For {$i=0,\ldots,\log B-1$}
    \For {$j=0,\ldots,B/2-1$}
        \State $(A^{(i+1)}_{2j}, A^{(i+1)}_{2j+1}) \leftarrow \textsc{MergeSplit}(A^{(i)}_{j'+j}, A^{(i)}_{j'+j+2^i}, i)$ where $j'=\lfloor j / {2^i} \rfloor \cdot 2^{i}$ 
        \State \Comment{Input: $j$-th pair of buckets with distance $2^i$ in $A^{(i)}$; Output: $j$-th pair of buckets in $A^{(i+1)}$}
        
    \EndFor
\EndFor    
\State \textbf{Output:} $A^{(\log B)} = A_0^{(\log B)} \| \ldots A_{B-1}^{(\log B)}$.

\medskip
\Function{$(A'_0, A'_1) \leftarrow$ MergeSplit}{$A_0, A_1, i$}
    \State $A'_0$ receives all real elements in $A_0 \cup A_1$ where the $(i+1)$-st MSB of the key is $0$   
    \State $A'_1$ receives all real elements in $A_0 \cup A_1$ where the $(i+1)$-st MSB of the key is $1$
    \State If either $A'_0$ or $A'_1$ receives more than $Z$ real elements, the procedure aborts with {\sf overflow}
    \State Pad $A'_0$ and $A'_1$ to size $Z$ with dummy elements and return $(A'_0, A'_1)$
\EndFunction   
\end{algorithmic}
\label{code:obin}
\end{algorithm}

\end{figure*}

\subsection{Oblivious Random Bin Assignment}
\label{sec:obin}

The input to the oblivious random bin assignment algorithm is an array $\X$ of $n$ elements. 
The goal is to obliviously and uniformly randomly distribute the elements into a set of bins. Each element is assigned to independent random bin, and elements are then routed into the bins obliviously. 

The algorithm first chooses a bucket size $Z$, which can be set to the security parameter $\sec$. 
Then, it constructs $B=\lceil 2n/Z \rceil$ buckets each of size $Z$.
Without loss of generality, assume $B$ is a power of $2$ --- if not, pad it to the next power of 2. Note that the algorithm introduces $n$ dummy elements, and the output is twice the size of the input array. 


Figure~\ref{fig:radix-sort} gives a graphic illustration of the algorithm for 8 input buckets and Algorithm~\ref{code:obin} gives the pseudocode.
Each element in $\X$ is assigned a random key in $[0, B-1]$ which represents a destination bucket.
Next, the algorithm repeatedly calls the \textsc{MergeSplit} subroutine to exchange elements between bucket pairs in $\log B$ levels to distribute elements into their destination buckets. 
The operation $(A'_0,A'_1)\leftarrow \textsc{MergeSplit}(A_0,A_1,i)$ involves four buckets at the time, distributing the elements in the two input buckets $A_0$ and $A_1$ into two output buckets $A'_0$ and $A'_1$.
$A'_0$ receives all the keys with $(i+1)$-th most significant bit (MSB) as 0 and $A'_1$ receives all the keys with $(i+1)$-th MSB as 1.

For now, assume the client can locally store two buckets.
For each \textsc{MergeSplit}, it reads (and decrypts) the two input buckets, swaps elements in the two buckets according to the above rule, and writes to the two output buckets (after re-encryption).
It is then easy to see that Algorithm~\ref{code:obin} is oblivious since the order in which the client reads and writes the buckets is fixed and independent of the input array.

When no bucket overflows, all real elements are correctly put into their assigned bins.
We now show that the probability of overflow is exponentially small in $Z$. 
Intuitively, this is because each bucket contains (in expectation) half dummy elements that serve as a form of ``slack'' to disallow overflow.

\begin{lemma}
\label{lemma:shuffle}
Overflow happens with at most $\epsilon(n, Z) = 2n/Z \cdot \log(2n/Z) \cdot e^{-Z/6}$ probability.
\end{lemma}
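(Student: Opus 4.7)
The plan is to union bound the overflow event over the $\log B$ routing levels and $B$ buckets per level, using a Chernoff tail bound on each individual bucket. For each pair $(i,j)$ with $1 \le i \le \log B$ and $0 \le j < B$, I would define $N_{i,j}$ to be the number of real elements whose path through the butterfly would deposit them in $A^{(i)}_j$ if \textsc{MergeSplit} never aborts. A key observation is that if the algorithm does abort, then at the first aborting bucket $A^{(i^*)}_{j^*}$ the actual count equals $N_{i^*,j^*}$ (no element has been discarded yet), and this count must exceed $Z$. Hence $\Pr[\text{overflow}] \le \sum_{i,j}\Pr[N_{i,j} > Z]$, and it suffices to analyze each $N_{i,j}$ in isolation.

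To pin down the distribution of $N_{i,j}$, I would unwind the pairing rule $j' = \lfloor j/2^i\rfloor \cdot 2^i$ across all $i$ butterfly steps to see that bucket $A^{(i)}_j$ pulls from exactly $2^i$ specific initial buckets (a contiguous block of $A^{(0)}$), and only from elements whose independently-drawn key matches a particular $i$-bit MSB prefix determined by $j$. Each of the $2^i \cdot (n/B) \le 2^i \cdot (Z/2)$ candidate real elements independently matches that prefix with probability $2^{-i}$, since the keys are i.i.d.\ uniform on $[0,B-1]$. Hence $N_{i,j}$ is a sum of independent Bernoullis with mean $\mu \le n/B \le Z/2$.

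A standard multiplicative Chernoff bound then gives $\Pr[N_{i,j} \ge Z] \le e^{-Z/6}$ whenever $\mu \le Z/2$ (for example via $\Pr[X \ge (1+\delta)\mu] \le e^{-\delta\mu/3}$ for $\delta \ge 1$, where $\delta\mu = Z - \mu \ge Z/2$). Summing over the $B\log B$ pairs $(i,j)$ yields $\Pr[\text{overflow}] \le B\log B \cdot e^{-Z/6}$, which matches the claimed $\epsilon(n,Z) = (2n/Z)\log(2n/Z) \cdot e^{-Z/6}$ since $B = 2n/Z$ in the clean case (or is within a small constant of it once $B$ is rounded up to the next power of two). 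The main obstacle I anticipate is the combinatorial bookkeeping in the second step: correctly identifying the $2^i$ initial buckets and the specific MSB prefix associated with $A^{(i)}_j$ under the butterfly's pairing rule. Once that structural claim is established, independence of the keys together with Chernoff and a union bound finish the argument.
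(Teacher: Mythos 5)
Your proposal is correct and follows essentially the same route as the paper: identify that each bucket $A^{(i)}_j$ draws from $2^i$ initial buckets of $\le Z/2$ real elements each, each landing there independently with probability $2^{-i}$ so the mean load is $\le Z/2$, apply a multiplicative Chernoff bound to get $e^{-Z/6}$ per bucket, and union bound over the $B\log B$ buckets. Your extra care in defining the idealized counts $N_{i,j}$ to handle the abort event cleanly is a nice touch the paper elides, but it does not change the argument.
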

\begin{proof}
\label{clm:proof-shuffle}
Consider a bucket $A^{(i)}_b$ at level $i$.
Observe that this bucket can receive real elements from $2^i$ initial buckets, each containing $Z/2$ real elements.
For each such element, we have chosen an independent and uniformly random key;
the element reaches $A^{(i)}_b$ only when the most significant $i$ bits of its key match $b$,
which happens with exactly $2^{-i}$ probability.
A Chernoff bound shows that $A^{(i)}_b$ overflows with less than $e^{-Z/6}$ probability.
Hence, a union bound over all levels and all buckets 
shows that overflow happens with less than $B \cdot \log B \cdot e^{-Z/6} = \epsilon(n,Z)$ probability.
\end{proof}


\subsection{Bucket Oblivious Random Permutation}
\label{sec:ORP}

After performing the oblivious random bin assignment, ORP can be simply achieved as follows:
scan the array and delete dummy elements from each bin (note that within each bin it is guaranteed that the real elements appear before the dummy elements). Then obliviously permute each bin and finally concatenate all bins.  We have:


\begin{lemma}
Bucket ORP oblivious implement the permutation functionality except for $\e(n,Z)$ probability. 
\end{lemma}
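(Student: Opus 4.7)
The plan is to reduce everything to the no-overflow event. Conditional on no bucket overflowing during the random bin assignment phase, I want to prove two things: (i) the algorithm's memory access pattern has a distribution that can be produced by a simulator that never sees the input, and (ii) the permutation applied to the input is uniformly distributed over $S_n$. Combined with Lemma~\ref{lemma:shuffle}, which bounds the overflow probability by $\epsilon(n,Z)$, these two facts yield the claimed $\epsilon(n,Z)$-obliviousness.

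For part (i), the bin-assignment phase of Algorithm~\ref{code:obin} already has a fully input-independent access pattern, because every \textsc{MergeSplit} call reads and writes buckets of the fixed size $Z$ in a schedule dictated only by the butterfly network. The only data-dependent behaviour added by the ORP post-processing is the per-bin scan that strips out dummies; since reals precede dummies inside each final bucket, this scan simply touches the first $n_b$ slots of bin $b$, so its access pattern is a deterministic function of the bin-size vector $n = (n_0,\ldots,n_{B-1})$. Because the $n_b$'s are determined by the $n$ i.i.d.\ uniform keys, they follow a multinomial distribution that does not depend on the input, so the simulator can sample $n$ itself and reproduce the entire transcript; the intra-bin random permutations occur entirely inside client storage and add no observable accesses beyond this.

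For part (ii), I would prove the sharper statement that, conditioned on the bin-size vector $n$, the output permutation $\sigma$ is uniform over $S_n$. Fix any target $\sigma \in S_n$. Since the output positions are partitioned into consecutive blocks of sizes $(n_0,\ldots,n_{B-1})$, the event $\{\text{out}=\sigma\}$ forces a unique key assignment $k^{(n,\sigma)}$ (element $i$ must land in the bin containing position $\sigma(i)$) together with a unique internal ordering inside each bin. Hence
\begin{align*}
\Pr[\text{out}=\sigma \mid n] \;=\; \frac{(1/B)^n \cdot \prod_b (1/n_b!)}{\binom{n}{n_0,\ldots,n_{B-1}} (1/B)^n} \;=\; \frac{1}{n!},
\end{align*}
independent of $\sigma$. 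The simulator that samples $n$ from the multinomial and emits the canonical access pattern therefore reproduces the real joint distribution of (access pattern, output) perfectly on the no-overflow event, because the output side --- a uniformly random permutation of the input, independent of $n$ --- matches the random-permutation functionality verbatim.

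The main subtlety is the uniformity calculation: without the explicit intra-bin random shuffle it would fail, since the order inside a bin left by \textsc{MergeSplit} can retain information about the input order; the cancellation above hinges on the $\prod_b n_b!$ factor from the intra-bin reshuffling exactly cancelling the multinomial normalizer. Everything else is routine book-keeping, combining these two parts with Lemma~\ref{lemma:shuffle}'s overflow bound to absorb the $\epsilon(n,Z)$ slack.
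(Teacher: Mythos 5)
Your proposal is correct and follows essentially the same route as the paper's proof: simulate the access pattern by sampling the bin loads (which are input-independent), and show that conditioned on the load vector the output permutation is uniform, with the $\prod_b n_b!$ factors from the intra-bin shuffles cancelling the multinomial coefficient to give $1/n!$; the paper phrases this as a count of equally likely outputs rather than your Bayes-style ratio, but the argument is identical. The overflow event is absorbed into the $\epsilon(n,Z)$ slack exactly as in the paper.
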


\begin{proof}
We first describe the simulator. 
The access pattern of the oblivious bin assignment algorithm is deterministic and the same for every input, where the overflow even is independent of the input itself. Therefore, it is easy to simulate the bin assignment. 
The simulator then pretends to simulate the randomly permuting of each bin. 
Then, 
the simulator chooses random loads $\vec{k}=(k_0, k_1, \ldots, k_{B-1})$, where $k_i$ is the load of the real elements in the $i$th bin. This is done by simply throwing $n$ elements into $B$ bins (``in the head''). If there is some $i$ for which $k_i > Z$ then the simulator aborts. The removal of the dummy elements is equivalent to the revealing of these loads. 

Clearly, $\vec{k}$ are distributed the same as in the real execution. The only difference between the simulated access pattern and the real one is in the case where the algorithm aborts as a result of an overflow before the last level, which occurs with at most $\e(n,Z)$ probability. 

We next show that the output of the algorithm is a random permutation, conditioned on the access pattern. As we previously described, it is actually enough to condition on the vector of random loads $\vec{k}=(k_0, k_1, \ldots, k_{B-1})$. We show that given any such vector, all permutations are equally likely.  

Fix a particular load $\vec{k}=(k_0, k_1, \ldots, k_{B-1})$. The algorithm works by first assigning the real elements into the bins, and then permuting within each bin. For every input, there are exactly ${n \choose k_0,\ldots,k_{B-1}}$ ways to distribute the real elements into the bins while achieving the vector of loads $\vec{k}$. Then, each bin is individually permuted, i.e., within each bin $i$, we have $k_i$ different possible ordering. Overall, 
the total number of possible outputs with that load is then
\[{n \choose k_0,\ldots,k_{B-1}} \cdot k_0! \cdot \ldots \cdot k_{B-1}! = n!\]
That is, even conditioned on some specific loads $\vec{k}=(k_0, k_1, \ldots, k_{B-1})$, all permutations are still equally likely.
Therefore, $\forall \pi$, $\Pr\left[\Pi = \pi \mid \vec{K}=\vec{k} \right] = \frac 1 {n!}$, and
\[ \Pr\left[\Pi = \pi\right] = \sum_{\vec{k}} \Pr\left[\Pi = \pi \mid \vec{K}=\vec{k} \right] \cdot \Pr\left[\vec{K}=\vec{k}\right] = \frac {1}{n!} \]

Our algorithm fails to implement the ORP only when some bin overflows during the oblivious random bin assignment, which happens with $\epsilon(n,Z)$ probability by Lemma~\ref{lemma:shuffle}.
\end{proof}

\subsection{Bucket Oblivious Sort}
\label{sec:osort}
Once we have ORP, it is easy to achieve oblivious sort: just invoke any non-oblivious comparison-based sort after ORP.

Since the functionality is deterministic, it is enough to consider separately correctness and simulation. Correctness follows from directly from the correctness of the ORP and the non-oblivious sort. As for obliviousness, given any input array, one can easily simulate the algorithm by first randomly permuting the array and then running the comparison-based non-oblivious sort. 
The access patterns of a comparison-based sort depend only on the relative ranking of the input elements, which is independent of the input array once the array has been randomly permuted. 

\subsection{Efficiency}
\label{sec:efficiency}

We analyze the efficiency of our algorithms and compare them to classic non-oblivious oblivious sorting algorithms in Table~\ref{tab:compare}.
We measure runtime using the number of memory accesses the clients needs to perform on the server.


For our algorithms, assuming the client can store $2Z$ elements locally, each $2n$-sized array is read and written once and there are $\log(2n/Z)<\log n$ of them.
So oblivious bin assignment and bucket ORP run in (less than) $4n\log n$ time.
Note that the last step of ORP, i.e., permuting each output bucket, can be incorporated with the last level of oblivious bin assignment.
Bucket oblivious sort additionally invokes a non-oblivious sort, and thus runs in $6n\log n$ time. 
This is within $3\times$ of merge sort and beats bitonic sort when $n$ is moderately large;
for example, $5\times$ faster than bitonic for $n=2^{30}$.
For an overflow probability of $2^{-80}$ and most reasonable values of $n$, $Z = 512$ suffices.


\section{Extensions}
\label{sec:extensions}

\subsection{Extension to Constant Client Storage}
\label{sec:O1client}
We now discuss how to extend our algorithms to the case where the client can only store $O(1)$ elements locally.

Each \textsc{MergeSplit} can be realized with a single invocation of bitonic sort.
Concretely, we first scan the two input buckets to count how many real elements should go to buckets $A'_0$ vs. $A'_1$, then tag the correct number of dummy elements going to either buckets, and finally perform a bitonic sort.

Next, we need to permute each output bucket obliviously with $O(1)$ local storage. 
This can be done as follows. 
First, assign each element in a bucket a uniformly random label of $\Theta(\log n)$ bits. 
Then, obliviously sort the elements by their random labels using bitonic sort. 
Since the labels are ``short'' (i.e., logarithmic in size), we may have collisions with $n^{-c}$ probability for some constant $c$, in which case we simply retry. 
In expectation, it succeeds in $1+o(1)$ trials. 


Since we invoke $B/2$ instances of bitonic sort on $2Z$ elements at each level,
the runtime is roughly $\log B \cdot B/2 \cdot 2Z \log^2 (2Z)) \approx 2 n\log n \log^2 Z$. 

\subsection{Better Asymptotic Performance}
Our algorithms can also be extended to have better asymptotic performance.
For this instantiation, we use a primitive called oblivious tight compaction.
Oblivious tight compaction receives $n$ elements each marked as either 0 or 1, and outputs a permutation of the $n$ elements such that all elements marked 0 appear before the elements that are marked 1. 
It should not be hard to see that oblivious tight compaction can be used to achieve \textsc{MergeSplit}.
Using the $O(1)$-client-storage and $O(n)$-time oblivious tight compaction construction from~\cite{asharov2018optorama}, bucket oblivious sort achieves $O(n\log n + n\log^2Z)$ runtime and $O(1)$ client storage.
Setting $Z=\omega(1)\log n$, bucket oblivious sort achieves $O(n\log n)$ runtime, $O(1)$ client storage, and a negligible in $n$ error probability.

\subsection{Locality}
\label{sec:locality}
Algorithmic performance when the data is stored on disk has been studied in the external disk model (e.g.,~\cite{RuemmlerW94,ArgeFGV97,Vitter01,Vitter06}) and references within). Recently, Asharov et al.~\cite{AsharovCNPRS19} extended this study to oblivious algorithms. In this setting, an algorithm 
is said to have $(p, \ell)$ locality if it has access 
to $p$ disks and 
accesses in total $\ell$ discontiguous memory regions in all disks combined. As an example, it is not hard to see that merge sort is a non-oblivious sorting algorithm that sorts an array of size $n$ in $O(n \log n)$ and $(3,\log n)$-locality, whereas quick sort  is not local for any reasonable $p$. 
This locality metric is motivated by the fact that real-world storage
media such as disks support sequential accesses
much faster than random seeks. Thus an algorithm that 
makes mostly sequential accesses would execute much faster in practice than one that  
makes mostly random accesses --- even if the two have the same runtime in a standard
word-RAM model. 

Guided by this new metric, Asharov et al.~\cite{AsharovCNPRS19} consider how to design oblivious algorithms and ORAM schemes that achieve good locality. 
Since sorting is one of the most important
building blocks in the 
design of oblivious algorithms, 
inevitably Asharov et al.~\cite{AsharovCNPRS19}
show a locality-friendly sorting algorithm.
Concretely, they show that there is a specific way to implement
the bitonic sort meta-algorithm,
such that the entire algorithm requires accessing 
$O(\log^2 n)$ distinct memory regions (i.e., as many as the depth of the sorting network) 
require only 2 disks to be available --- in other words,
the algorithm achieves $(2, O(\log^2 n))$-locality.

We observe that our algorithm, when implemented properly, is a locality-friendly oblivious sorting algorithm. 
Our algorithm 
outperforms Asharov et al.~\cite{AsharovCNPRS19}'s  scheme 
by an almost logarithmic 
factor improvement in locality. 
To achieve this, the crux is to implement all $n/Z$ instances of 
$\textsc{MergeSplit}$ in the same layer of the butterfly network 
while accessing a small number of discontiguous regions. Specifically, the $\textsc{MergeSplit}$ operation works on 4 buckets at a time, while reading two buckets from the input layer, and writing to two consecutive buckets in the output layer. Moreover, the different invocations of $\textsc{MergeSplit}$ on the same layer deal with consecutive buckets. By carefully distributing the buckets among the different disks, and by using bitonic sort while implementing the $\textsc{MergeSplit}$ operation, we conclude:

\begin{corollary}
There exists a statistically oblivious sort algorithm which, except with $\approx e^{-Z/6}$ probability, completes in $O(n \log n \log^2 Z)$ work and with $(3, O(\log n \log^2 Z)$) locality.
\end{corollary}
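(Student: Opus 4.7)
The plan is to combine the bucket oblivious sort of Section~\ref{sec:osort}, with \textsc{MergeSplit} realized via bitonic sort as in Section~\ref{sec:O1client}, and a locality-friendly non-oblivious comparison sort (merge sort, which has $O(n\log n)$ work and $(3,O(\log n))$-locality) for the post-ORP step. Correctness, obliviousness, and the $e^{-Z/6}$ overflow bound are then inherited unchanged from Lemma~\ref{lemma:shuffle} and the ORP/oblivious-sort lemmas of Section~\ref{sec:construction}. The asymptotic work bound is also immediate: each of the $\log B = O(\log n)$ butterfly levels invokes $B/2 = O(n/Z)$ instances of bitonic sort on $2Z$ elements at $O(Z\log^2 Z)$ each, so each level is $O(n\log^2 Z)$ and the whole pipeline is $O(n\log n\log^2 Z)$. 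The substantive content of the proof is thus entirely in choosing a three-disk layout and a schedule that deliver the claimed locality.

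The target per-level budget is $O(\log^2 Z)$ distinct regions on each disk; over the $O(\log n)$ levels, plus the $O(\log n)$ regions from the terminating merge sort, this yields the claim. I would maintain three disks as follows: $D_1$ and $D_2$ each hold half of the current butterfly layer's buckets, split by exactly the bit of the bucket index that the current level routes on, so that at level $i$ the ``left'' bucket $A^{(i)}_{j'+j}$ of every pair sits on $D_1$ and the ``right'' bucket $A^{(i)}_{j'+j+2^i}$ sits on $D_2$; $D_3$ is a scratch disk on which each bitonic sort operates inside a fixed $2Z$-word window. The key scheduling idea is to run the $B/2$ \textsc{MergeSplit} instances of a level in a SIMD, pass-by-pass fashion: rather than run one full bitonic sort at a time, run the first comparator pass across all $B/2$ bucket pairs, then the second, and so on, keeping the $2Z$ scratch window hot and the reads/writes on $D_1, D_2$ strictly sequential within each pass.

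With this schedule, each of the $O(\log^2 Z)$ comparator passes at a level becomes a single sequential sweep on $D_3$ and on the relevant input/output disks, hence contributes only $O(1)$ discontiguous regions per disk, giving $O(\log^2 Z)$ regions per level. The disk-assignment invariant across levels is maintained for free: since \textsc{MergeSplit} already routes by exactly one bit of the key, writing the zero-branch output to the next level's $D_1$-role disk and the one-branch output to its $D_2$-role disk sets up the next level's input partition at no extra locality cost. Combining with the locality-friendly merge sort at the end yields the claimed $(3, O(\log n\log^2 Z))$-locality and $O(n\log n\log^2 Z)$ work, with failure probability inherited from Lemma~\ref{lemma:shuffle}.

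The step I expect to require the most care is verifying the SIMD pass-by-pass schedule: one must check that a single comparator pass of bitonic sort, applied in order to the $B/2$ independent $2Z$-windows, really projects to a sequential sweep on each of the three disks, and that $O(1)$ client storage suffices to stream one window through scratch before moving to the next. This reduces to the observation that bitonic sort's comparator pattern is data-independent and depends only on positions inside the $2Z$-window, so the global access trace across a level is a concatenation of identical per-window traces; the remaining work is bookkeeping to fix the window ordering on each disk consistently with the pass order. No new idea is needed here, but the constants hiding in the per-level $O(\log^2 Z)$ region count are pinned down by this step.
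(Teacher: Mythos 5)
Your overall architecture is the same as the paper's (the paper itself only sketches this corollary): realize each \textsc{MergeSplit} with bitonic sort, append a locality-friendly non-oblivious sort after the ORP, inherit the $e^{-Z/6}$ bound from Lemma~\ref{lemma:shuffle}, and --- crucially --- amortize the discontiguous regions by running the data-independent comparator passes of all $B/2$ bitonic instances in a layer in lockstep, so that a layer costs $O(\log^2 Z)$ regions rather than $O((n/Z)\log^2 Z)$. This lockstep scheduling is precisely the ``crux'' the paper alludes to when it says the point is to implement all $n/Z$ instances of \textsc{MergeSplit} in one layer while touching few regions, and your work accounting matches the paper's.

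One concrete step in your disk layout fails as literally stated, though it is repairable. You restore the invariant ``left bucket of each pair on $D_1$, right bucket on $D_2$'' by sending the zero-branch output $A^{(i+1)}_{2j}$ to the next level's $D_1$-role disk and the one-branch output $A^{(i+1)}_{2j+1}$ to its $D_2$-role disk. But level $i+1$ pairs buckets at distance $2^{i+1}$, i.e.\ it splits on bit $i+1$ of the bucket index, whereas $2j$ and $2j+1$ differ only in bit $0$; the two outputs of a single \textsc{MergeSplit} agree on bit $i+1$ and therefore belong on the \emph{same} disk at the next level. Under your literal rule $D_1$ holds the even-indexed and $D_2$ the odd-indexed buckets, so both members of every level-$(i+1)$ pair land on one disk and the per-pair sequential reads you rely on degenerate into $\Theta(B)$ seeks. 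The fix is to route both outputs of \textsc{MergeSplit} number $j$ to the disk selected by bit $i$ of $j$: that bit is constant on blocks of $2^i$ consecutive values of $j$, and within each block the written buckets $2j,2j+1$ are consecutive, so the writes to each disk remain a single sequential sweep per layer and the rest of your argument (including the per-pass constant-region sweeps over the concatenated $2Z$-windows on the scratch disk, which you correctly flag as the step needing two offset read cursors and matching write cursors) goes through.
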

%



\paragraph{Acknowledgement.} The authors thank Yutong Dai and Peijing Xu for proofreading the manuscript.

\bibliographystyle{alpha}
\bibliography{refs}

\appendix

\end{document}